\newcommand{\norm}[1]{\left\lVert#1\right\rVert}
\newtheorem{lem}{\bf{Lemma}}
\newtheorem{Def}{\bf{Definition}}
\newtheorem{prop}{\bf{Proposition}}
\newtheorem{thm}{\bf{Theorem}}
\begin{document}

%


\title{\Large \bf
	Secure Estimation for Unmanned Aerial Vehicles against Adversarial Cyber Attacks}

\author{Qie~Hu$^*$, Young Hwan Chang$^*$, and~Claire~J.~Tomlin
}

%
%


\maketitle
\thispagestyle{empty}
\pagestyle{empty}


\begin{abstract}
In the coming years, usage of Unmanned Aerial Vehicles (UAVs) is expected to grow tremendously. Maintaining security of UAVs under cyber attacks is an important yet challenging task, as these attacks are often erratic and difficult to predict. Secure estimation problems study how to estimate the states of a dynamical system from a set of noisy and maliciously corrupted sensor measurements. The fewer assumptions that an estimator makes about the attacker, the larger the set of attacks it can protect the system against.
In this paper, we focus on sensor attacks on UAVs and attempt to design a secure estimator for linear time-invariant systems based on as few assumptions about the attackers as possible. We propose a computationally efficient estimator that protects the system against arbitrary and unbounded attacks, where the set of attacked sensors can also change over time.
In addition, we propose to combine our secure estimator with a Kalman Filter for improved practical performance and demonstrate its effectiveness through simulations of two scenarios where an UAV is under adversarial cyber attack.

\end{abstract}

\vspace{3 mm}
\section{Introduction}
The already widespread use of Unmanned Aerial Vehicles (UAVs) is expected to continue to grow at a tremendous rate over the next few years \cite{faa}. 
Civilian applications of UAVs, such as cargo delivery \cite{Google, Amazon}, infrastructure surveillance \cite{Asctec}, and agricultural applications \cite{Croplife}, can provide great benefits to society. 

However, UAVs may be vulnerable to a variety of cyber attacks.
For example, to manage the increased UAV traffic, each UAV may periodically send its position measurements wirelessly to a remote traffic management center.
Similarly, two UAVs may exchange position and velocity information in a collaborative collision avoidance procedure.
These communication links could be subject to Man-In-The-Middle (MITM) attacks in which a malicious agent spoofs the information being sent and/or received. Successful attacks can lead to collisions of vehicles, economic loss and bodily damage.
Therefore, maintaining the security of UAVs under such cyber attacks is an important but also challenging task, as attacks are often erratic and difficult to model.


Secure estimation problems study how to estimate the states of a dynamical system from a set of noisy and maliciously corrupted sensor measurements. 
In designing such estimators, it is desirable to make as few assumptions about the attackers as possible. This is because it is very difficult, if not impossible, to predict the behavior of attackers, and when an attack signal violates the assumptions of a secure estimator, then this estimator would fail to detect the attack.

Researchers have studied various approaches to securing general cyber-physical systems, each based on a different set of assumptions about the attacker.
For example, the authors in \cite{Bullo, Liu} assume that the attack signal would follow certain probabilistic distributions and then design filters for detection of such attacks. 
In \cite{Wu, Basar, Basar2, Walrand, Pappas}, the authors use the game theory framework, where the controller and attacker are players with competing goals in a game. Attackers are assumed to adopt specific strategies that maximize a certain cost and the controller or estimator is designed to minimize such a cost.
More recently, Fawzi \textit{et al.} proposed in \cite{Fawzi2014} a secure estimation method for arbitrary attacks, with a limiting assumption that the set of attacked sensors do not change with time. 

\pagestyle{empty}
In this paper, we focus on sensor attacks on UAVs and attempt to design a secure estimator for linear time-invariant (LTI) systems based on as few assumptions about the attackers as possible. First, we do not assume that the attack signals follow any stochastic distributions, and thus our proposed estimator works for arbitrary and unbounded attacks.
Second, we allow the set of attacked sensors to change over time.
The only assumption we make is that the number of attacked sensors is sparse.

We formulate this secure estimation problem into the classical error correction problem, from which we propose an $l_1$-optimization based estimator that is computationally efficient.
In addition, we prove the maximum number of sensor attacks that can be corrected with our estimator and propose a practical method for estimator design that guarantees accurate decoding.
Finally, to improve the estimator's practical performance, we propose to combine our secure estimator with a Kalman Filter (KF), and demonstrate its effectiveness using two examples of UAVs under adversarial cyber attacks.

\vspace{3mm}
\section{Classical Error Correction: A Review}\label{sec_review}
\subsection{Compressed Sensing}
Sparse solutions $x\in \mathbb{R}^n$, are sought to the following problem:
\begin{equation}
	\min_x \norm{x}_0 \text{ subject to } b= Ax
	\label{eq:CS}
\end{equation}
where $b \in \mathbb{R}^m$ are the measurements, and $A \in \mathbb{R}^{m\times n}~ (m \ll n)$ is a sensing matrix. $\norm{x}_0$ denotes the number of nonzero elements of $x$. The following lemma provides a sufficient condition for a unique solution to (\ref{eq:CS}).

\begin{lem} (\hspace{1sp}\cite{Candes_Tao}) \label{lem:CS}
If the sparsest solution to (\ref{eq:CS}) has $\norm{x}_0 = q$ and $m\ge 2q$ and all subsets of $2q$ columns of $A$ are full rank, then the solution is unique.
\end{lem}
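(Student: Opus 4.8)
The plan is to argue by contradiction, exploiting the elementary fact that the difference of two solutions of $b = Ax$ lies in $\ker A$ and, if both are sparse, is itself sparse. First I would suppose there exist two distinct solutions $x_1 \neq x_2$ of (\ref{eq:CS}), each attaining the minimal sparsity, so that $\norm{x_1}_0 = \norm{x_2}_0 = q$. Then $A(x_1 - x_2) = b - b = 0$, so the nonzero vector $z := x_1 - x_2$ lies in $\ker A$, and since $\mathrm{supp}(z) \subseteq \mathrm{supp}(x_1) \cup \mathrm{supp}(x_2)$ we get $\norm{z}_0 \le 2q$.

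Next I would translate the statement ``$z \in \ker A$ with $\norm{z}_0 \le 2q$'' into a linear-dependence statement about columns of $A$. Writing $A = [a_1 \cdots a_n]$ and $S = \mathrm{supp}(z)$, the identity $\sum_{i \in S} z_i a_i = 0$ with the $z_i$ not all zero says the columns $\{a_i : i \in S\}$ are linearly dependent. Since $|S| \le 2q$ and $n \ge m \ge 2q$, I can enlarge $S$ to an index set $T$ with $|T| = 2q$; the columns indexed by $T$ remain linearly dependent, so the submatrix $A_T \in \mathbb{R}^{m \times 2q}$ has rank strictly less than $2q$, i.e. it is not full rank. This contradicts the hypothesis that every subset of $2q$ columns of $A$ is full rank. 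Hence no such pair $x_1 \neq x_2$ can exist, and the sparsest solution is unique.

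The argument is routine once set up; the only points needing care are the support bookkeeping (confirming $\norm{z}_0 \le 2q$, and that padding the support up to exactly $2q$ columns preserves linear dependence) and noting where $m \ge 2q$ is used — it is precisely what makes ``$2q$ columns full rank'' a non-vacuous requirement, since an $m \times 2q$ matrix can have rank $2q$ only when $m \ge 2q$. I do not anticipate a genuine obstacle: this is the classical spark-type uniqueness argument of Cand\`es--Tao specialized to exact recovery.
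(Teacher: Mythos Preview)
Your proposal is correct and follows essentially the same contradiction argument as the paper: assume two distinct $q$-sparse solutions, observe that their difference is a nonzero $2q$-sparse vector in $\ker A$, and obtain a contradiction with the full-rank hypothesis on $2q$-column submatrices. You are simply more explicit than the paper about the support bookkeeping and the padding step, and about where the hypothesis $m \ge 2q$ enters.
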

\begin{proof}
Suppose the solution is not unique. Therefore, there exists $x_1 \neq  x_2$ such that $Ax _1 = b$ and $Ax_2 = b$ where $\norm{x_1}_0 = \norm{x_2}_0 = q$. Then, $A(x_1 - x_2) = 0$ and $x_1 - x_2 \neq 0$. Since $\norm{x_1-x_2}_0 \leq 2q$ and all $2q$ columns of $A$ are full rank (i.e., linearly independent), it is impossible to have $x_1-x_2\neq 0$ that satisfies $A(x_1-x_2) = 0$. This contradicts the assumption.
\end{proof}

\subsection{The Error Correction Problem \cite{Candes_Tao}} 
Consider the classical error correction problem: $y=Cx + e$ where $C\in \mathbb{R}^{l\times n}$ is a coding matrix $(l > n)$ and assumed to be full rank. We wish to recover the input vector $x \in \mathbb{R}^n$ from corrupted measurements $y$. Here, $e$ is an arbitrary and unknown sparse error vector. To reconstruct $x$, note that it is obviously sufficient to reconstruct the vector $e$ since knowledge of $Cx + e$ together with $e$ gives $Cx$, and consequently $x$ since $C$ has full rank \cite{Candes_Tao}. In \cite{Candes_Tao}, the authors construct a matrix $F$ which annihilates $C$ on the left, i.e.,  $FCx = 0$ for all $x$. Then, they apply $F$ to the output $y$ and obtain
\begin{equation}
	\tilde y = F (Cx + e) = Fe.
\end{equation}
Thus, the decoding problem can be reduced to that of reconstructing a sparse vector $e$ from the observations $\tilde y = Fe$. Therefore, by Lemma \ref{lem:CS}, if all subsets of $2q$ columns of $F$ are full rank, then we can reconstruct any $e$ such that $\| e \|_0 \leq q$.

\vspace{3mm}
\section{Secure Estimation}


\subsection{Problem Formulation}
Consider the LTI system as follows:
\begin{equation}
\begin{aligned}
x(k+1) &= A_o x(k) + B u(k) \\
y(k) &= C x(k) + e(k),
\end{aligned} 
\label{eq:system_model_se}
\end{equation} 
where $x(k) \in \mathbb{R}^n$, $y(k) \in \mathbb{R}^p$ and $u(k) \in \mathbb{R}^m$ are the states, measurements and control inputs at time step $k$. $e(k) \in \mathbb{R}^p$ represents the attack signal at time $k$. Our goal is to reconstruct the initial state $x(0)$ of the plant from the corrupted observations $y(k)$'s where $k=0,...,T-1$.

The attack vector $e(k)$ is such that if the $i$th sensor is attacked at time $k$, then $e_i(k)$, the $i$th element of $e(k)$ is nonzero, otherwise $e_i(k) = 0$. We assume that the attack signal can be arbitrary and unbounded. In addition, we assume that the set of attacked sensors can change over time. As illustrated by the following example, if 2 sensors are attacked at each time step, we can have sensors 1 and 3 attacked at time step 0, sensors 2 and 3 attacked at time 1, and so on:
\begin{equation}
	\begin{bmatrix} e(0)  & \lvert & e(1) & \lvert &  ...  & \end{bmatrix} 
	= \begin{bmatrix} * & 0 & * & \cdots \\
					       0 & * & 0 & \cdots \\
					       * & * & 0 & \cdots \\
					       0 & 0 & * &\cdots 
			\end{bmatrix}, \nonumber \\
\end{equation}
where $*$ denotes a nonzero component (i.e., an attack or corruption). 

Furthermore, assume that a local control loop implements secure state feedback and is not subject to attack: $u(k) = Gx(k)$. In the case of UAVs, this corresponds to using measurements from onboard, hardwired sensors such as Inertial Measurement Units (IMU) for autopilot and trajectory following. The resulting closed loop system is:
\begin{equation}
\begin{aligned}
x(k+1) &= A x(k)\\
y(k) &= C x(k) + e(k),
\end{aligned} 
\label{eq:system_model_cl}
\end{equation} 
where the closed loop system matrix $A=A_o+BG$. 

Finally, we define the number of correctable attacks/errors as follows:
\begin{Def}\label{def:num_err_change}
When the set of attacked sensors/nodes can change over time, $q$ errors are correctable after $T$ steps by the estimator/decoder $\mathcal{D}: {(\mathbb{R} ^p) } ^T  \rightarrow \mathbb{R}^n$ if for any $x(0) \in \mathbb{R}^n$ and any sequence of vectors $e(0),...,e(T-1)$ in $\mathbb{R}^p$ such that $\lvert \textsf{supp}(e(k)) \rvert \leq q$, 
we have $\mathcal{D} (y(0),...,y(T-1)) = x(0)$ where $y(k) = CA^k x(0) + e(k)$ for $k=0,...,T-1$.
\end{Def}



\subsection{Methodology}

Let $E_{q,T}$ denote the set of error vectors $\begin{bmatrix} e(0); ~ ...~  ;  e(T-1) \end{bmatrix}   \in  \mathbb{R}^{p\cdot T} $ where each $e(k)$ satisfies $\|e(k)\|_0 \leq q \leq p$. 
\begin{eqnarray} \label{eq:sys_err_corr}
\begin{aligned}
	Y &\triangleq \begin{bmatrix} y(0) \\ y(1) \\ \vdots \\ y(T-1) \end{bmatrix}  = \begin{bmatrix} Cx(0) + e(0)\\ CA x(0) + e(1) \\ \vdots \\ CA^{T-1} x(0) + e(T-1) \end{bmatrix} \\
		& =
		\begin{bmatrix} C \\ CA \\ \vdots \\ CA^{T-1} \end{bmatrix} x(0) + E_{q,T}  \triangleq \Phi x(0) + E_{q,T}
		\label{eq:decoder_Phi}
\end{aligned}
\end{eqnarray}
where $Y \in \mathbb{R}^{p\cdot T}$ is a collection of corrupted measurements over $T$ time steps and $\Phi \in \mathbb{R}^{p\cdot T \times n}$ represents an observability-like matrix of the system. 
Here, we need to assume that $\operatorname{rank}(\Phi) = n$; otherwise, the system is unobservable and we cannot determine $x(0)$ even if there is no attack (i.e., $E_{q,T} = 0$).

Inspired by the error correction techniques proposed in \cite{Candes_Tao} and \cite{David_Chang}, we first determine the error vector $E_{q,T}$, and then solve for $x(0)$. 
Consider the $QR$ decomposition of $\Phi \in \mathbb{R}^{p\cdot T \times n}$,
\begin{eqnarray}
	\Phi = \begin{bmatrix} Q_1 & Q_2 \end{bmatrix} \begin{bmatrix} R_1 \\ 0 \end{bmatrix} = Q_1 R_1
\end{eqnarray}
where $\begin{bmatrix} Q_1 & Q_2 \end{bmatrix} \in \mathbb{R}^{p\cdot T \times p\cdot T}$ is orthogonal, $Q_1 \in \mathbb{R}^{p\cdot T\times n}, Q_2 \in \mathbb{R}^{p\cdot T \times (p\cdot T-n)}$, and $R_1 \in \mathbb{R}^{n\times n}$ is a rank-$n$ upper triangular matrix.
Pre-multiplying (\ref{eq:decoder_Phi}) by $\begin{bmatrix} Q_1 & Q_2 \end{bmatrix} ^\top$ gives:
\begin{equation}
	\begin{bmatrix} Q_1 ^\top \\ Q_2 ^\top \end{bmatrix} Y = \begin{bmatrix}R_1 \\ 0  \end{bmatrix} x(0) + \begin{bmatrix} Q_1 ^\top \\ Q_2^\top \end{bmatrix} E_{q,T}.
	\label{eq:QR}
\end{equation}
We can compute $E_{q,T}$ by using the second block row:
\begin{equation}
	\tilde Y \triangleq Q_2^\top Y = Q_2^\top E_{q,T}
	\label{eq:E_est}
\end{equation}
where $Q_2^\top \in \mathbb {R} ^{ (p\cdot T-n) \times p\cdot T}$.
From Lemma \ref{lem:CS}, (\ref{eq:E_est}) has a unique, $s$-sparse solution (where $s\le q\cdot T$) if all subsets of $2s$ columns (at most $2 q\cdot T$ columns) of $Q_2^\top$ are full rank. Clearly, this is a reasonable assumption if $(p\cdot T-n) \ge 2q\cdot T$. Therefore, we consider solving the following $l_1$-minimization problem:
\begin{equation}
	\hat{E}_{q,T} = \arg \min_E \norm { E}_{l_1} \text{ s.t. } \tilde Y = Q_2^\top E
	\label{eq:solve_E}
\end{equation}
Now, given the vector $\hat{E}_{q,T}$, we can compute $x(0)$ from the first block row of (\ref{eq:QR}) as follows:
\begin{equation}
	x(0) = R_1^{-1} Q_1^\top (Y- \hat{E}_{q,T})
	\label{eq:QR1}
\end{equation}
The following lemma provides the conditions under which the solution to (\ref{eq:QR1}) exists and is unique.
\begin{lem} \label{lem:EC}
	$x(0) $ is the unique solution if $\lvert \textsf{supp}( \Phi z) \rvert > 2 s = 2 (q\cdot T)$ for all $z \in \mathbb{R}^n \backslash \{ 0 \}$.
\end{lem}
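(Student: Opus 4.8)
The plan is to read the hypothesis as the statement that no two distinct initial states admit $s$-sparse error explanations of the same data, and to derive uniqueness directly from it, mirroring the proof of Lemma \ref{lem:CS}. Suppose $x_1 \neq x_2$ and $E_1, E_2$ (each of the form $E_{q,T}$, so $\norm{E_i}_0 \le s = q\cdot T$) both satisfy $Y = \Phi x_1 + E_1 = \Phi x_2 + E_2$. Subtracting gives $\Phi(x_1 - x_2) = E_2 - E_1$, whose support has size at most $2s$ because it is a difference of two $s$-sparse vectors. But $z \triangleq x_1 - x_2 \neq 0$, so the hypothesis forces $\lvert \textsf{supp}(\Phi z) \rvert > 2s$ — a contradiction. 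Hence at most one $x(0)$ is consistent with $Y$ under the sparsity bound.

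To connect this with the reconstruction in (\ref{eq:solve_E})--(\ref{eq:QR1}), I would next observe that $R_1$ is a square rank-$n$ upper-triangular matrix and therefore invertible, so (\ref{eq:QR1}) is well defined. The key linear-algebra identity is $\ker(Q_2^\top) = \operatorname{range}(Q_1) = \operatorname{range}(\Phi)$, which follows from $\Phi = Q_1 R_1$ with $R_1$ invertible together with the orthogonality of $\begin{bmatrix} Q_1 & Q_2 \end{bmatrix}$. This makes the hypothesis equivalent to ``every subset of $2s$ columns of $Q_2^\top$ is full rank'': a nonzero $v$ with $\lvert \textsf{supp}(v)\rvert \le 2s$ lies in $\ker(Q_2^\top)$ iff $v = \Phi z$ for some $z \neq 0$, which the hypothesis excludes. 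Lemma \ref{lem:CS} then gives that the $s$-sparse solution of $\tilde Y = Q_2^\top E$ in (\ref{eq:E_est}) is unique, so $\hat E_{q,T} = E_{q,T}$; substituting into (\ref{eq:QR1}) and using $Q_1^\top Q_1 = I_n$ yields $R_1^{-1} Q_1^\top(Y - E_{q,T}) = R_1^{-1} Q_1^\top \Phi\, x(0) = x(0)$, so the recovered state exists and equals the (now unique) true initial state.

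The main obstacle I anticipate is the bridge between the purely geometric hypothesis on $\Phi$ and the ``spark''-type full-rank condition on $Q_2^\top$ through which Lemma \ref{lem:CS} is invoked — i.e.\ cleanly proving $\ker(Q_2^\top) = \operatorname{range}(\Phi)$ and the resulting equivalence of conditions. A secondary subtlety worth stating explicitly is that Lemma \ref{lem:CS} concerns uniqueness of the sparsest (i.e.\ $l_0$) solution, whereas (\ref{eq:solve_E}) solves the $l_1$ relaxation; so one should either note that under these conditions the $l_1$ minimizer coincides with the unique sparse solution, or read the present lemma as a statement about solvability of the secure-estimation problem itself rather than about that particular convex program. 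The remaining pieces — invertibility of $R_1$, orthonormality of $Q_1$, and the bound $\lvert \textsf{supp}(E_2 - E_1)\rvert \le 2s$ — are routine.
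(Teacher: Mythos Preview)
Your proposal is correct, and your second paragraph is essentially the paper's own proof: use $\ker(Q_2^\top)=\operatorname{range}(\Phi)$ to show that the hypothesis forces every $2s$ columns of $Q_2^\top$ to be linearly independent, then invoke Lemma~\ref{lem:CS} to get unique recovery of $E_{q,T}$ and hence of $x(0)$ via (\ref{eq:QR1}). Your first-paragraph direct argument (subtracting two candidate $s$-sparse explanations) and your flag on the $l_0$/$l_1$ gap are sound observations that go slightly beyond what the paper spells out.
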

\begin{proof}
We first prove the claim C1: if $\lvert \textsf{supp}( \Phi z) \rvert > 2 s = 2 (q\cdot T)$ for all $z \in \mathbb{R}^n \backslash \{ 0 \}$ then all subsets of $2s$ columns of $Q_2^\top$ are full rank. Then by Lemma \ref{lem:CS} and noting that by definition the null space of $Q_2^\top$ equals the column space of $\Phi$, we have $x(0) $ is the unique solution.

Proof of C1 by contradiction: Suppose there exist $2s$ columns of $Q_2^\top$ that are linearly dependent. Then, there exists $E_0 \neq 0$ such that $Q_2^\top E_0 = 0$ where $\lvert \textsf{supp}(E_0) \rvert \le 2s$. Since the null space of $Q_2^\top$ equals the column space of $\Phi$, there exists $z$ such that $E_0 = \Phi z$ (i.e., $E_0$ is in the column space of $\Phi$). Then, $ \lvert  \textsf{supp}(\Phi z) \rvert = \lvert \textsf {supp} (E_0) \rvert \le 2s $ (contradiction).
\end{proof}

The sufficient condition, provided in Lemma \ref{lem:EC}, for the existence of a unique solution to (\ref{eq:QR1}) is hard to check as it requires satisfiability of the condition for all $z \in \mathbb{R}^n \backslash \{ 0 \}$. In the following Theorem, we prove an equivalent, yet simple-to-check, sufficient condition that only needs to be verified for the eigenvectors of $A$.


\begin{thm} Let $A \in \mathbb{R}^{n\times n}, C \in \mathbb{R}^{p\times n}$. Assume that $C$ is full rank, $(A,C)$ is observable and $A$ has $n$ distinct positive eigenvalues such that $0 < \lambda_1 < \lambda_2 < \cdots < \lambda_n$. 
Define:
\begin{itemize}
\item
$s_i \triangleq \lvert \textsf{supp} (Cv_i) \vert$, where $v_i$ is an eigenvector of $A$, 
\item
$\mathcal{S} \triangleq \{ s_1, s_2, \cdots, s_n \}$,
\item
For every $m \in \{2, \ldots, n\}$, let $\mathcal{S}_m$ be any subset of $\mathcal{S}$ with $m$ elements, define $T_{\mathcal{S}_m} \triangleq \frac {  (m-2) \cdot p + \min \mathcal{S}_m } {\max \mathcal{S}_m - 2q }$.
Then $T_m$ is such that $T_m > T_{\mathcal{S}_m}$ for all subsets $\mathcal{S}_m$, i.e. all subsets of $m$ elements from the set $\mathcal{S}$.
\end{itemize}
Choose $T$ such that  $T \ge \max \{ T_2, \cdots, T_n \}$.
Then, the following are equivalent:
\begin{equation}
\begin{aligned} 
 (i)  &~\forall v_i \in \mathbb{R}^n \text{ where } Av_i =\lambda_i v_i, \\
 	&~ \lvert \textsf{supp}(Cv_i) \rvert > 2q  \\
  (ii)  &~\forall v_i \in \mathbb{R}^n \text{ where } Av_i =\lambda_i v_i,  \\
  	&~\lvert \textsf{supp} (\Phi v_i) \rvert > 2q \cdot T  \\
  (iii) &~  \forall z \in \mathbb{R}^n\backslash \{0 \}, \lvert \textsf{supp} (\Phi z) \rvert > 2 q \cdot T\\
  \nonumber 
\label{eq:new_condition}
\end{aligned}
\end{equation}
\end{thm}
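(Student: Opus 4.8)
The plan is to prove the equivalence by going around the cycle $(iii)\Rightarrow(ii)\Rightarrow(i)\Rightarrow(iii)$, with the first two links being essentially bookkeeping and $(i)\Rightarrow(iii)$ carrying all the weight. For $(iii)\Rightarrow(ii)$: because $A$ is real with real eigenvalues, each eigenspace is a genuine real subspace, so the eigenvectors $v_i$ lie in $\mathbb{R}^n$ and are nonzero, and the universally-quantified statement $(iii)$ applies to them directly. For $(i)\Leftrightarrow(ii)$: since $Av_i=\lambda_i v_i$, the $k$-th block of $\Phi v_i$ is $\lambda_i^k Cv_i$, and as $\lambda_i>0$ this block has exactly the support of $Cv_i$; hence $\lvert\textsf{supp}(\Phi v_i)\rvert = T\,\lvert\textsf{supp}(Cv_i)\rvert = T s_i$, so $\lvert\textsf{supp}(\Phi v_i)\rvert>2qT$ if and only if $s_i>2q$.

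For $(i)\Rightarrow(iii)$ I would take an arbitrary $z\neq 0$ and expand it in the eigenbasis as $z=\sum_{i\in I}c_i v_i$ with all $c_i\neq 0$ and $m:=\lvert I\rvert\ge1$. The $k$-th block of $\Phi z$ is then $CA^k z=\sum_{i\in I}c_i\lambda_i^k\,Cv_i$, so for each measurement coordinate $j\in\{1,\dots,p\}$ the $j$-th entries of the $T$ blocks form the finite sequence $f_j(0),\dots,f_j(T-1)$ with $f_j(k)=\sum_{i\in I}\bigl(c_i(Cv_i)_j\bigr)\lambda_i^k$. Writing $r_j$ for the number of $i\in I$ with $(Cv_i)_j\neq 0$, i.e.\ the number of nonvanishing exponential terms in $f_j$, the key analytic fact is that a real exponential sum $\sum_{\ell=1}^r a_\ell\mu_\ell^{\,k}$ with nonzero coefficients and distinct positive bases has at most $r-1$ real zeros (induction on $r$: factor out the largest base and apply Rolle's theorem to the derivative, which is again such a sum with one fewer term). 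Hence if $r_j\ge1$ then $f_j\not\equiv0$ and it vanishes at no more than $r_j-1$ of the integers $0,\dots,T-1$, so it is nonzero at least $T-r_j+1$ times, while if $r_j=0$ the sequence is identically zero.

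Summing over coordinates gives $\lvert\textsf{supp}(\Phi z)\rvert\ge\sum_{j:\,r_j\ge1}(T-r_j+1)=(T+1)P-\sigma$, where $P=\lvert\bigcup_{i\in I}\textsf{supp}(Cv_i)\rvert$ counts the coordinates with $r_j\ge1$ and $\sigma=\sum_j r_j=\sum_{i\in I}s_i$. Set $M=\max_{i\in I}s_i$ and $\mu=\min_{i\in I}s_i$; hypothesis $(i)$ gives $s_i>2q$ for all $i$, so $P\ge M>2q$. Since $(T+1)P-\sigma$ is increasing in $P$, it suffices to show $(T+1)M-\sigma>2qT$, i.e.\ $T(M-2q)>\sigma-M$. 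Now $\sigma-M$ is the sum of the $m-1$ non-maximal values among the $s_i$, of which one equals $\mu$ and each of the remaining $m-2$ is at most $p$, whence $\sigma-M\le(m-2)p+\mu$. For $m=1$ the claim is immediate since then $\lvert\textsf{supp}(\Phi z)\rvert=T s_i>2qT$; for $m\ge2$ we obtain $\frac{\sigma-M}{M-2q}\le\frac{(m-2)p+\mu}{M-2q}=T_{\mathcal{S}_m}$ with $\mathcal{S}_m=\{s_i:i\in I\}$, and the choice $T\ge\max\{T_2,\dots,T_n\}$ together with $T_m>T_{\mathcal{S}_m}$ gives $T>T_{\mathcal{S}_m}\ge\frac{\sigma-M}{M-2q}$, i.e.\ $T(M-2q)>\sigma-M$, so $\lvert\textsf{supp}(\Phi z)\rvert\ge(T+1)M-\sigma>2qT$. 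Since $z\neq0$ was arbitrary, $(iii)$ follows.

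I expect the exponential-sum zero count — and threading it correctly through the coordinate-by-coordinate tally — to be the main obstacle; everything after that is the identity $\lvert\textsf{supp}(\Phi z)\rvert\ge(T+1)P-\sigma$ plus the elementary bound $\sigma-M\le(m-2)p+\mu$, the latter engineered so that exactly the threshold $T_{\mathcal{S}_m}$ from the statement drops out. Two smaller points to handle with care are that $r_j\ge1$ genuinely forces $f_j\not\equiv0$ (this is where distinctness of the $\lambda_i$ prevents cancellation) and the initial reduction to real eigenvectors.
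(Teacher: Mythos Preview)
Your argument is sound. The paper itself does not supply a proof of this theorem---it simply refers the reader to an archived companion paper---so there is no in-paper argument to compare against. Your cycle $(iii)\Rightarrow(ii)\Leftrightarrow(i)\Rightarrow(iii)$ is the natural one, and the hard direction $(i)\Rightarrow(iii)$ is carried out correctly: the Rolle/Descartes bound of at most $r-1$ real zeros for an $r$-term exponential sum with distinct positive bases is exactly what is needed to control, coordinate by coordinate, how many of the $T$ blocks of $\Phi z$ can vanish in a fixed row, and your bookkeeping $\lvert\textsf{supp}(\Phi z)\rvert\ge (T+1)P-\sigma\ge (T+1)M-\sigma$ together with $\sigma-M\le (m-2)p+\mu$ recovers precisely the threshold $T_{\mathcal{S}_m}$ appearing in the statement.

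One small point worth tightening when you write this up: the theorem presents $\mathcal{S}=\{s_1,\dots,s_n\}$ as a set, whereas your argument implicitly treats $\{s_i:i\in I\}$ as a multiset of cardinality $m=\lvert I\rvert$ (so that repeated $s_i$ values still count). Since $T_{\mathcal{S}_m}$ depends only on $m$, $\min\mathcal{S}_m$, and $\max\mathcal{S}_m$, the multiset reading is the right one and your inequality $\sigma-M\le(m-2)p+\mu$ goes through in either case; just make the interpretation explicit. Also note that your lower bound $(T+1)P-\sigma$ remains valid even if some individual terms $T-r_j+1$ happen to be negative, since each such term still underestimates the true nonnegative count in that row.
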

\noindent


\begin{proof} Interested readers are referred to the proof for Theorem 1 in our archived paper \cite{Chang_Qie}.

\end{proof}



\noindent
Theorem 1 
states that if the feedback system and the secure estimator are designed such that all the conditions in the theorem are satisfied, then our proposed secure estimator can guarantee accurate correction of $q$ errors by checking the following very simple condition:
\begin{equation}
\forall v_i \in \mathbb{R}^n \text{ where } Av_i =\lambda_i v_i, ~ \lvert \textsf{supp}(Cv_i) \rvert > 2q.  \nonumber
\end{equation}

\vspace{1mm}
\subsection{Number of Correctable Errors}\label{sec:max_q}

Given that the set of attacked nodes can change over time and $e(k)$ satisfies $\lvert \textsf{supp} (e(k)) \rvert \le q$ for all $k$, we prove in Proposition \ref{prop:maximum} (see below) that the maximum number of correctable errors (as defined in Definition \ref{def:num_err_change}) by our decoder is $\lceil p/2-1 \rceil$, where $p$ is the number of measurements. 
\begin{prop}\label{prop:maximum} 
Let $A_0 \in \mathbb{R}^{n \times n}$, $B \in \mathbb{R}^{n \times m}$ and $C \in \mathbb{R}^{p \times n}$ and assume that the pair ($A_0$, $B$) is controllable, $C$ is full rank and each row of $C$ is not identically zero. Then there exists a finite set $F \subset \mathbb{R}_+$ such that for any choice of $n$ numbers $\lambda_1, \cdots, \lambda_n \in \mathbb{R}_+ \backslash F$ such that $0<\lambda_1< \cdots < \lambda_n$, there exists $G \in \mathbb{R}^{m \times n}$ such that:
\begin{itemize}
\item
The eigenvalues of the closed-loop matrix $A~(= A_0+BG)$ are $\lambda_1, \cdots, \lambda_n$.
\item
If the pair ($A, C$) is observable, then the number of correctable errors for the pair ($A, C$) is maximal after $T= \max\{n, T^*\}$ time steps and is equal to $\lceil p/2-1 \rceil$, where $T^*$ is the value of $T$ from Theorem 1. 
\end{itemize}
\end{prop}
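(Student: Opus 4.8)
The plan is to prove the claim in two directions: an \textbf{impossibility} half showing that no decoder can beat $\lceil p/2-1 \rceil$ under the time-varying attack model, and a \textbf{constructive} half exhibiting a $G$ for which our decoder attains it. For the impossibility half I would first show that if $2q \ge p$ then $q$ errors cannot be corrected after \emph{any} number of steps $T$. Pick any $z \ne 0$, set $x(0) = z$ and $x'(0) = 0$; for each $k$ the vector $CA^k z \in \mathbb{R}^p$ has at most $p \le 2q$ nonzero entries, so it splits as $CA^k z = e^{(1)}(k) - e^{(2)}(k)$ with $\lvert \textsf{supp}(e^{(i)}(k)) \rvert \le q$. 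The corrupted measurement sequences produced by $(x(0), e^{(2)})$ and by $(x'(0), e^{(1)})$ are then identical, so no map $\mathcal{D}$ can return the correct initial state for both; hence every correctable $q$ satisfies $q < p/2$, i.e. $q \le \lceil p/2 \rceil - 1 = \lceil p/2 - 1 \rceil$.

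For the construction, since $(A_0, B)$ is controllable, Heymann's lemma supplies $G_0 \in \mathbb{R}^{m\times n}$ and $f \in \mathbb{R}^m$ so that, writing $\widetilde A_0 := A_0 + BG_0$ and $b := Bf$, the single-input pair $(\widetilde A_0, b)$ is controllable. By single-input pole placement there is, for any prescribed $\lambda_1, \dots, \lambda_n$, a row $\tilde g \in \mathbb{R}^{1\times n}$ with $\sigma(\widetilde A_0 + b\tilde g) = \{\lambda_1, \dots, \lambda_n\}$; then $G := G_0 + f \tilde g$ gives $A = A_0 + BG = \widetilde A_0 + b\tilde g$ with exactly the desired eigenvalues, which is the first bullet.

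The heart of the argument is that the eigenvector \emph{directions} of $A$ are forced by the $\lambda_i$ alone. Assuming $\lambda_i \notin \sigma(\widetilde A_0)$ (these finitely many reals go into $F$), from $(\widetilde A_0 + b\tilde g) v_i = \lambda_i v_i$ one gets $v_i = (\lambda_i I - \widetilde A_0)^{-1} b \, (\tilde g v_i)$, a nonzero scalar multiple of $r(\lambda_i) := (\lambda_i I - \widetilde A_0)^{-1} b$ \emph{regardless} of which $\tilde g$ (hence which $G$) realized the pole placement. Thus $Cv_i$ is parallel to $C r(\lambda_i)$, whose $j$-th entry is $\frac{C_j \, \mathrm{adj}(\lambda_i I - \widetilde A_0)\, b}{\det(\lambda_i I - \widetilde A_0)}$. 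The numerator $n_j(\lambda) := C_j \, \mathrm{adj}(\lambda I - \widetilde A_0)\, b$ is a polynomial of degree $\le n-1$ and is not identically zero: $n_j \equiv 0$ would make the transfer function $C_j(\lambda I - \widetilde A_0)^{-1} b$ vanish, forcing every Markov parameter $C_j \widetilde A_0^k b$ to be zero, which by controllability of $(\widetilde A_0, b)$ forces $C_j = 0$, contradicting that no row of $C$ is identically zero. Hence $n_j$ has at most $n-1$ real roots, and I would take $F := (\sigma(\widetilde A_0) \cap \mathbb{R}_+) \cup \bigcup_{j=1}^{p} \{ \lambda \in \mathbb{R}_+ : n_j(\lambda) = 0 \}$, a finite subset of $\mathbb{R}_+$ depending only on $A_0, B, C$ (through the once-and-for-all choice of $G_0, f$). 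For any $\lambda_1, \dots, \lambda_n \in \mathbb{R}_+ \setminus F$ with $0 < \lambda_1 < \cdots < \lambda_n$, every entry of every $Cv_i$ is nonzero, i.e. $\lvert \textsf{supp}(Cv_i) \rvert = p$.

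To conclude, set $q := \lceil p/2 - 1 \rceil$, so $2q < p$ and hence $\lvert \textsf{supp}(Cv_i) \rvert = p > 2q$ for every eigenvector of $A$ — exactly hypothesis (i) of Theorem 1. Since $C$ is full rank, $(A,C)$ is observable, and $A$ has the $n$ distinct positive eigenvalues $\lambda_1 < \cdots < \lambda_n$, Theorem 1 applies: taking $T \ge \max\{n, T^*\}$ with $T^*$ the threshold from Theorem 1 gives $\operatorname{rank}(\Phi) = n$ (observability together with $T \ge n$) and condition (iii), $\lvert \textsf{supp}(\Phi z) \rvert > 2qT$ for all $z \ne 0$; by Lemma \ref{lem:EC} this makes $x(0)$ the unique solution consistent with any error sequence satisfying $\lvert \textsf{supp}(e(k)) \rvert \le q$, so $q = \lceil p/2 - 1 \rceil$ errors are correctable after $T$ steps, and with the impossibility half this number is maximal. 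I expect the genuinely delicate step to be the third one — producing a \emph{single} finite exceptional set $F$ valid for all admissible spectra; the resolution hinges on Heymann's lemma (reducing to a single input without disturbing $C$) and on the observation that, once $\lambda_i$ avoids $\sigma(\widetilde A_0)$, the eigenvector direction is independent of the feedback, so only the $p$ scalar polynomials $n_j$ must be kept away from zero.
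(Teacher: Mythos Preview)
Your argument is correct and follows essentially the same route as the paper, which simply cites Proposition~4 of \cite{Fawzi2014} for the construction (Heymann reduction to a single input, eigenvector direction $v_i \propto (\lambda_i I - \widetilde A_0)^{-1}b$, and the polynomial argument showing $n_j\not\equiv 0$) and then invokes Theorem~1. Your write-up is more self-contained: you spell out the Heymann/resolvent construction in full and, in addition, supply the impossibility half ($2q\ge p$ makes two initial states indistinguishable) that the paper leaves implicit in the word ``maximal.'' One phrasing nit: the clause ``Since $C$ is full rank, $(A,C)$ is observable, and $A$ has \ldots'' should be read as a list of three hypotheses being checked for Theorem~1, not as a claimed implication --- consider rewording to avoid that ambiguity.
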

\begin{proof}
The proof for Proposition 4 in \cite{Fawzi2014} shows that if the chosen poles $\lambda_1, \cdots, \lambda_n$ are distinct, positive and do not fall in some finite set $F$, then there is a choice of $G$ such that the eigenvalues of $A~(=A_0+B)$ are exactly $\lambda_1, \cdots, \lambda_n$, and the corresponding eigenvectors $v_i$ are such that $\lvert \textsf{supp} (C v_i) \rvert = p$. Thus, by Theorem 1, the number of correctable errors for $(A,C)$ is $\lceil p/2-1 \rceil$.
\end{proof}

In addition, recall that $E_{q,T}$ consists of the error vectors $e(0), \cdots, e(T-1)$ stacked vertically and our proofs for the existence of a unique solution to (\ref{eq:QR1}) are independent of how the individual error (nonzero) terms are distributed in the vector $E_{q,T}$. Thus, we can remove the assumption: $\lvert \textsf{supp} (e(k)) \rvert \le q$ for all $k$, and allow $e(k)$ to appear in an arbitrary fashion, e.g. $\lvert \textsf{supp} (e(0)) \rvert = 2q$ and $\lvert \textsf{supp} (e(1)) \rvert = 0$, as long as $\sum_{k=0}^{T-1} \lvert \textsf{supp} (e(k)) \rvert \leq q\cdot T$, then our $q$-error-correcting decoder can still recover the true states. In other words, our proposed secure estimator can protect the system against more general attacks where the number of attacked sensors is not necessarily less than or equal to $q$ at every time step.

\vspace{3mm}
\section{Combination of Secure Estimation and Kalman Filter}\label{sec:estimation}
Consider the state estimation problem for the following LTI system under attack:
\begin{equation}
\begin{aligned}
x(k+1) & = A x(k) + B u(k)\\
y(k) & = C x(k) + e(k) + v(k),
\end{aligned}
\end{equation}
where $x$, $y$, $u$ and $e$ are as defined in (\ref{eq:system_model_se});
and $v$ is a zero mean independent and identically distributed (i.i.d.) Gaussian measurement noise. 

A KF can be used to estimate the states by modeling the attack signal as noise. More specifically, define a new measurement noise $\bar{v}(k) = e(k) + v(k)$ to give a new measurement equation $y(k) = C x(k) + \bar{v}(k)$. A KF can then estimate the states from the inputs $u(k)$ and the corrupted measurements $y(k)$ \cite{KwonACC}. 
One caveat with this method is that KFs assume zero mean and i.i.d. white Gaussian measurement noise, however, attack signals are usually erratic and may be poorly modeled by Gaussian processes \cite{KwonACC}, i.e., $e(k)$ and consequently, $\bar v(k)$ may not be Gaussian. Take Global Positioning System (GPS) spoofing attacks for example, attack signals are often structured to resemble normal GPS signals or can be genuine GPS signals captured elsewhere. 
When the system is subjected to attacks that are poorly modeled by Gaussian processes, it is reasonable to expect KFs to fail to recover the true states. 

On the other hand, our proposed secure estimator does not assume the attack signal to follow any model, and therefore, it works for arbitrary and unbounded attacks. The only assumption is that the number of attacked sensors is sparse, i.e., less than $\lceil p/2-1\rceil$. 
As the set of attacked sensors becomes less sparse, our secure estimator occasionally fails to recover the true states. 
Based on these observations, we propose to combine our secure estimator with a KF to improve its practical performance, as detailed in Algorithm 1.
\begin{algorithm}
\caption{Combined secure estimator with KF}
\label{al:se_kf}
\begin{algorithmic}[1]
\State Initialize the KF
\For{each $k$}
	\If{$k \geq T$}
		\State Estimate the attack signal at time $k$, $\hat e(k)$, using secure estimator
	\Else
		\State Set $\hat e(k) = 0$
	\EndIf
	\State Form a new measurement equation: $\tilde y(k) =  C x(k) + \tilde v(k)$, where $\tilde y(k) = y(k) - \hat e(k)$ and $ \tilde v(k) = e (k) - \hat e(k) + v(k)$
	\State Apply standard KF using $u$ and $\tilde y$ 
\EndFor
\end{algorithmic}
\end{algorithm}

The intuition is that the secure estimator acts as a pre-filter for the KF, so that $\tilde v(k)$ is close to a zero mean i.i.d. Gaussian process even when the true attack signal $e(k)$ is not. 
At most time steps $k$, the secure estimator perfectly recovers $e(k)$, i.e., $\hat e(k) = \hat e(k)$, hence $\tilde v(k) = v(k)$ and thus, is a zero mean Gaussian process. What happens when the secure estimator fails? (\ref{eq:sys_err_corr}) shows that the estimated state at time $k$, $\hat x(k)$, is independent from the estimated state at another time step $\hat x(l)$ ($l \neq k$). 
As a result, when the secure estimator fails, its estimation error, $e(k) - \hat e(k)$, appears to be quite random. 
Putting these together: $\tilde v(k) = e(k) - \hat e(k) + v(k)$ is closer to a zero mean i.i.d. white Gaussian process than $\bar v(k)$ (i.e., the corresponding measurement noise if a KF is applied directly to estimate the states), which improves the KF's performance. 
Finally, the \textit{if} statement in Algorithm 1 ensures that the secure estimator always has access to $T$ past measurements.

Next, we demonstrate the effectiveness of our proposed method through simulations of a UAV under two types of adversarial attacks, which also provides a realistic example illustrating the behaviors described in this section.

\vspace{3mm}
\section{Numerical Examples}

\subsection{UAV Model}
We consider a quadrotor with the following dynamics:
\begin{equation}
\begin{aligned}
x(k+1) &= A_0 x(k) + B u(k)  + g\\
y(k) &= C x(k) + e(k) + v(k), \\
\end{aligned}
\end{equation}
\noindent 
where $x = [p_x, v_x, \theta_x, \dot \theta_x, p_y, v_y, \theta_y, \dot\theta_y, p_z, v_z]^T$ is the state vector. $p_x$, $p_y$ and $p_z$ represent the quadrotor's position along the $x$, $y$ and $z$ axis, respectively, and $v_i$'s are their corresponding velocities. $\theta_x$ and $\theta_y$ are the pitch and roll angles respectively, and $\dot \theta_i$'s are their corresponding angular velocities. 
The input vector $u = [\theta_{r,x}, \theta_{r,y}, F]^T$, where $\theta_{r,i}$ is the reference pitch or roll angle, and $F$ is the commanded thrust in the vertical direction. $y = [\tilde{p}_x, \tilde{p}_y, \tilde{p}_z]^T$ 
represents corrupted position measurements under attack $e$ and measurement noise $v$.
The constant vector $g$ represents gravitational effects and can be dropped without loss of generality because we can always subtract it out in $u$. 
Further details about this model and its derivation can be found in \cite{Bouffard}. Finally, the matrix $C$ depends on the particular measurements taken in each example.


\subsection{Decoder Design via Pole-Placement}

We assume that the UAV uses the state feedback control law $u(k) = G x(k)$\footnote{In the GPS spoofing example, direct uncorrupted state measurements are not available. Therefore a KF is used to give estimated states which are then used for state feedback control.}, where $G$ is the feedback matrix which can be designed. In this section, we show that we can design $G$ to achieve our desired trade-off between the control performance and the secure estimation performance. 

If the open loop pair $(A_0, B)$ is controllable, then the closed loop poles can be placed anywhere in the complex plane by appropriate choice of $G$. First, we design a Linear Quadratic Regulator (LQR) and evaluate its secure estimation performance: we check the number of errors that the resulting secure estimator can correct by finding the maximum $q$ for which $\lvert \textsf{supp} (C v_i) \rvert > q$ for all $i$. 
Figure \ref{fig:ex_pole} shows the results for a matrix $C \in \mathbb{R}^{5\times 10}$ (i.e., 5 measurements). Observe that $\lvert \textsf{supp} (C v_i) \rvert < p = 5$ for $i=1,2,9,10$ and furthermore, $\lvert \textsf{supp} (C v_i) \rvert = 1 > 0$ for $i = 9$ and $10$, which means that the resulting secure decoder can correct zero errors!
As shown in Figure \ref{fig:ex_pole}, to improve the secure estimation performance, we perturb the closed-loop poles slightly until $\lvert \textsf{supp} (C v_i) \rvert = p$ for all $i$, i.e., we design a secure decoder that can achieve the maximum number of correctable errors within the limits of $p$ (i.e., the number of measurements). By keeping the perturbations on the poles small, our final controller achieves both good control and estimation performances.

\begin{figure}[t]
\center
\includegraphics[width=0.40\textwidth]{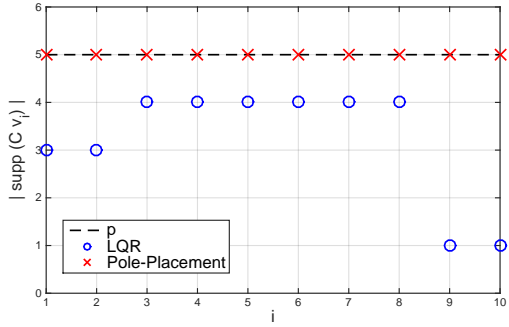}
\caption{$ \lvert \textsf{supp} (C v_i) \rvert $ for all eigenvectors $v_i$ of the closed-loop matrix $A$ for 2 feedback controllers: a LQR and a controller designed by pole-placement. Black dashed line is at $p = 5$, i.e., the number of measurements.}
\label{fig:ex_pole}
\end{figure}


\begin{figure}
\center
\includegraphics[width=0.35\textwidth]{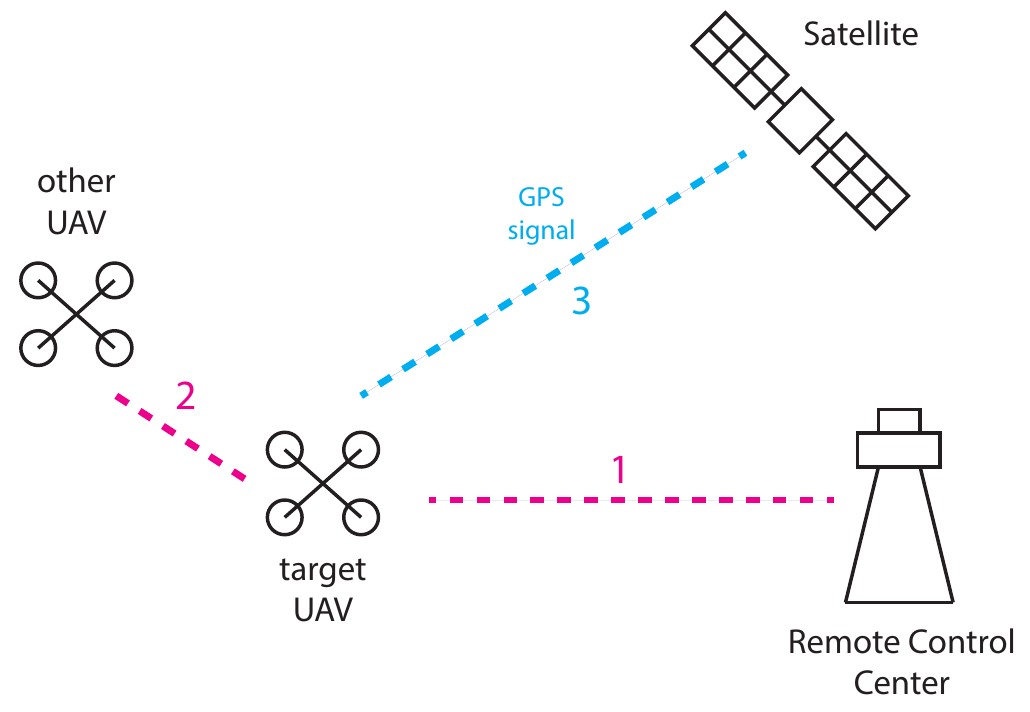}
\caption{Different communication channels that are subjected to cyber attacks.}
\label{fig:ex_uav_pic}
\end{figure}


\subsection{UAV under Adversarial Cyber Attack}

\subsubsection{MITM Attack in Communication} \label{sec:uav_utm}
In this section, we consider MITM attacks targeted at Channels 1 and 2 in Figure \ref{fig:ex_uav_pic}, where a malicious agent spoofs the information being sent and/or received over these channels. 
The goal of the remote control center or the other UAV is to accurately estimate the true flight path of the target UAV from compromised measurements. 
Note that the attack does not affect the actual path of the target UAV (as opposed to the GPS spoofing example later in this section).

Assume that the attacker aims to deceive the receiver that the target UAV is deviating in the $x$-direction, therefore she spoofs the $x$-position measurements by injecting a continuous and increasing attack signal in $p_x$.
To make the estimation task even harder for the receiver, at each time step, the attacker injects a Gaussian noise to an additional randomly selected measurement, and the choice of this measurement changes over time. 

In this example, we first demonstrate the effectiveness of our proposed decoder design using the pole-placement method by comparing the estimation performance of the decoder resulting from (1) a LQR controller and (2) a controller designed using pole-placement as described in the previous section.
We then implement the latter controller, and compare the performance of three different estimation schemes: (1) KF only (KF), (2) secure estimator only (SE), and (3) secure estimator combined with KF (KF+SE). 

\begin{figure}
\center
\includegraphics[width=0.48\textwidth]{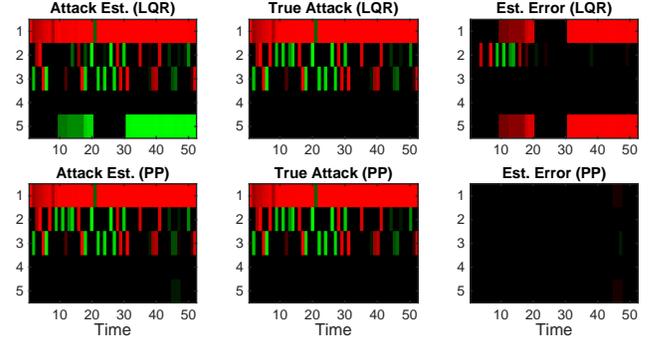}
\caption{Estimated attack signal, true attack signal and estimation error in the attack signal of the estimator (SE) with 2 different feedback controllers: LQR, controller designed via pole-placement (PP); with 5 measurements. Left column shows estimated attack signals. Middle column shows true attack signal. Right column shows estimation error. Each row corresponds to one type of measurement. Red pixels indicate positive values, green pixels are negative values and black indicates zero. }
\label{fig:ex_pp_err}
\end{figure}

Throughout this example, $y \in \mathbb{R}^5$ and the measurements include the $x$, $y$ and $z$ positions and two additional randomly selected states. 
Figure \ref{fig:ex_pp_err} compares the accuracy of the estimated attack signals by the LQ regulator (top) and the one designed via pole-placement (bottom).
In each plot, one row corresponds to one sensor, and the first 3 rows are the $x$, $y$ and $z$ position measurements, respectively. The color of the pixel indicates the value of the signal or the estimation error.
The middle plots show the true attack signal and they highlights three points: first, the attacked sensors change with time; second, the number of attacked sensors at each time step $k$ is less or equal to 2; third, only position measurements are corrupted.
The left plots show the estimated attacked signal by each decoder. It is easy to see that the decoder resulting from a feedback controller designed via pole-placement estimates the attack signal much more accurately. 
The right plots of this figure highlight this observation by explicitly showing the estimation error of the attack signal for each measurement.

Figure \ref{fig:ex_uav_remote} compares the estimated flight paths by all three methods: KF, SE and KF+SE.
The UAV starts from the blue triangle and follows the solid blue line to land at the blue square. The estimated paths by each method are shown in red dashed lines. Observe that the KF fails to filter out the attack signal in the $x$-position measurements as the attack is highly non-Gaussian, and the estimated trajectory differs significantly from the true one. 
On the other hand, SE correctly estimates most portions of the trajectory and the final position of the vehicle, nevertheless it gives spontaneous errors. 
Finally the combined method KF+SE perfectly recovers the true path of the target UAV.

\begin{figure}
\center
\includegraphics[width=0.5\textwidth]{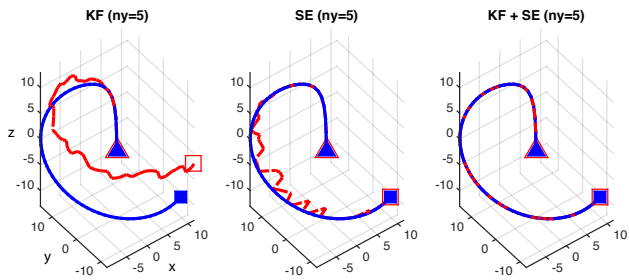}
\caption{Estimated UAV trajectory by three methods under MITM attack: KF only (KF), secure estimator only (SE), secure estimator with KF (KF+SE). Solid blue lines are the true UAV trajectories. They start from the blue triangle and end at the blue square. Red dashed lines represent estimated trajectories by each method, with 5 measurements.}
\label{fig:ex_uav_remote}
\end{figure}

\begin{figure}
\center
\includegraphics[width=0.5\textwidth]{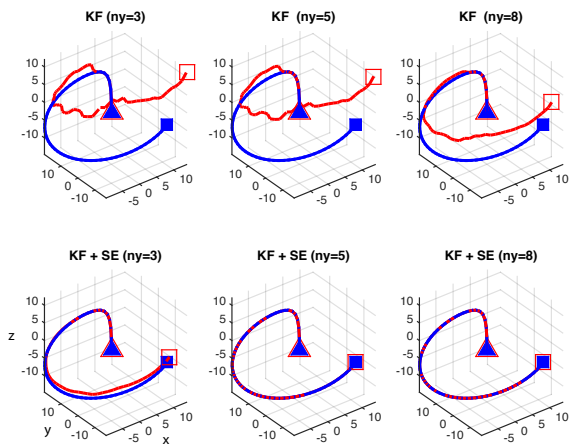}
\caption{Desired and actual UAV trajectory in different cases: KF and KF+SE, each using 3, 5 and 8 different measurements. Blue solid lines are the desired trajectory. Red dash lines are the actual UAV trajectory under adversarial attack.}
\label{fig:ex_uav_traj}
\end{figure}

\vspace{1mm}
\subsubsection{GPS Spoofing}

In this section, we focus on adversarial attacks in the GPS navigation system (Channel 3 in Figure \ref{fig:ex_uav_pic}). Consider the scenario where a UAV uses a Linear Quadratic Gaussian (LQG) controller to follow a desired trajectory, $x_r(k)$. In other words, a KF uses corrupted and noisy measurements $y(k)$ to produce a state estimate $\hat x(k)$, which is then used for state feedback control: $u(k) = G (\hat x(k) - x_r(k))$, where $G$ is the feedback matrix. 
Note that in the previous example (Section \ref{sec:uav_utm}), the feedback controller had access to uncompromised state measurements $x(k)$, therefore the true trajectory of the UAV is unaffected by attacks. 
In this example, however, the UAV uses estimated states $\hat x(k)$ for feedback control and path following. Therefore, if the measurements are corrupted and the state estimates are poor, then the UAV may deviate away from its desired path. Hence, the goal of the UAV is to accurately follow its planned trajectory in the presence of cyber attacks. 

Assume an attacker spoofs the GPS position measurements in order to deviate the UAV from its desired path. She injects a sinusoidal signal into the $x$ position measurement, as well as a Gaussian noise to a randomly chosen position measurement at each time step. 

In this example, we explore the effect of the number of sensor measurements on the secure estimation performance of two schemes: (a) KF only, (b) KF+SE.
First, we assume that the UAV only uses GPS for navigation, i.e., 3 positional measurements. 
Figure \ref{fig:ex_uav_traj} shows that KF completely fails to estimate the attack signal (KF, $n_y = 3$), consequently, the actual UAV trajectory (red dashed line) deviates significantly from its desired path (solid blue line).
On the other hand, the combined method KF+SE's estimated attack signals are significantly more accurate, therefore the UAV can follow its planned path much more closely (Figures \ref{fig:ex_uav_traj}, KF + SE, $n_y = 3$).
Recall from Proposition \ref{prop:maximum} that the maximum number of correctable errors for a system with $p$ measurements is $\lceil p/2-1 \rceil$, which equals 1 in this case. However, at any time step $k$, there are at most 2 attacked sensors, which exceeds the above limit and explains the estimation error of the combined method KF+SE. Despite this small estimation error, KF+SE still outperforms the KF.

Next, we show the effect of increasing the number of measurements ($n_y$, or equivalently $p$) on the estimation performance and consequently, the UAV's path following performance. 
This can be achieved through sensor fusion. 
For example, autonomous UAVs often use IMUs in addition to GPS for navigation, the former provides additional measurements such as the UAV's velocities, pitch and roll angles. 
Figure \ref{fig:ex_uav_traj} shows that increasing the number of measurements has no effect on the KF's estimation accuracy and hence, its path following ability. 
Even when 8 measurements are used the UAV equipped with a KF still fails to follow the desired trajectory. 
On the other hand, increasing the number of measurements improves the estimation performance of the secure estimator SE and consequently the performance of the combined scheme KF+SE. Figure \ref{fig:ex_uav_traj} shows that when 5 and 8 measurements are used, the UAV can follow its original planned path perfectly (KF + SE $n_y=5$ and KF + SE $n_y=8$).

\section*{Conclusion}
In this paper, we consider the estimation problem for UAVs under adversarial cyber attack and propose a secure estimation based KF that is computationally efficient and makes no assumptions about the attack signal model. We demonstrate that our proposed secure estimator outperforms standard KF, using numerical examples of UAVs under adversarial cyber attacks. This is important not only for today's aviation system but also delivery systems with drones in the near future.

\bibliographystyle{IEEEtran}
\tiny
\bibliography{reference}
\bigskip


\section*{Copyright Statement}

\noindent \small {The authors confirm that they, and/or their company or organization, hold copyright on all of the original material included in this paper. The authors also confirm that they have obtained permission, from the copyright holder of any third party material included in this paper, to publish it as part of their paper. The authors confirm that they give permission, or have obtained permission from the copyright holder of this paper, for the publication and distribution of this paper as part of the ICAS 2016 proceedings or as individual off-prints from the proceedings.}

\end{document}